\documentclass{emsprocart}


\contact[o.turek@ujf.cas.cz]{Ond\v{r}ej Turek, Nuclear Physics Institute, Academy of Sciences of the Czech Republic, Hlavn\'{i} 130, \v{R}e\v{z}, Czech Republic, \emph{and} Joint Institute for Nuclear Research, Dubna, Russia}



\usepackage[pdftex]{graphicx}


\newtheorem{theorem}{Theorem}[section]

\newtheorem{lemma}[theorem]{Lemma}


\theoremstyle{definition}

\newtheorem{remark}[theorem]{Remark}

\def\RR{\mathbb R}
\def\CC{\mathbb C}

\def\im{\mathrm i}
\def\exp{\mathrm e}
\def\Sc{\mathcal{S}}
\def\Trans{\mathcal{T}}
\def\Prob{\mathcal{P}}

\def\diss{\mathrm{diss}}
\def\adj{\mathrm{adj}}
\def\const{\mathrm{const}}
\newcommand{\diag}{\operatorname{diag}}
\newcommand{\rank}{\operatorname{rank}}
\newcommand{\Tr}{\operatorname{Tr}}
\newcommand{\sgn}{\operatorname{sgn}}

\title[On quantum graph filters with flat passbands]{On quantum graph filters with flat passbands}

\author[Ond\v{r}ej Turek]{Ond\v{r}ej Turek\thanks{The author is grateful to
Prof.~Taksu Cheon for inspirative discussions on the topic, and to Kochi University of Technology for hospitality during the writing of
this paper. The work was supported by Youth JINR Grant No.\ 15-302-08.}}

\begin{document}

\begin{abstract}
We examine transmission through a quantum graph vertex to which auxiliary edges with constant potentials are attached. We find a characterization of vertex couplings for which the transmission probability from a given ``input'' line to a given ``output'' line shows a flat passband. The bandwidth is controlled directly by the potential on the auxiliary edges. Vertices with such couplings can thus serve as controllable band-pass filters. The paper extends earlier works on the topic. The result also demonstrates the effectivity of the $ST$-form of boundary conditions for a study of scattering in quantum graph vertices.
\end{abstract}

\begin{classification}
Primary 81Q35; Secondary 81Q93, 81U40, 81U15.
\end{classification}

\begin{keywords}
Quantum graph, vertex coupling, spectral filtering, quantum control.
\end{keywords}

\maketitle

\section{Introduction}

Quantum mechanics on graphs is a useful tool for the examination of quantum motion on microscopic wires, lattices and other graph-like nanostructures. The method has been intensively developed since 1980s with regard to the technological progress achieved in microfabrication. The development of the subject led to rich literature to date; see e.g.\ monographs~\cite{AG05,EKST08} and references therein. On the other hand, the discipline remains relatively new and is still rapidly advancing.

Quantum graph models are useful in particular for a design of quantum systems with prescribed properties. In this paper we focus on scattering problems on systems consisting of several wires connected together in one point to form a star. When a particle moving along a wire reaches the vertex, it is scattered to the other wires. The scattering characteristics depend on the energy of the particle and on the nature of the potential in the point. Such system is modelled by a star graph with a certain wave function coupling in the vertex. It is known that a vertex of degree $n$ generally features $n^2$-parametric family of admissible couplings~\cite{KS99}, and the scattering characteristics considerably vary in dependence on the coupling parameters~\cite{CET09}. Obviously, one can take advantage of this fact in a design of quantum devices with particular particle transmission properties. On the other hand, the role of the coupling parameters in the scattering characteristics is not well understood yet.

The problem studied in this paper concerns a star graph with $n$ edges, some of which being subject to a constant nonzero potential $V$. Scattering in such a system depends i.a.\ on the strength of the potential. It was noticed in earlier works~\cite{TC12,TC13} that a certain particular choice of the vertex coupling gives rise to a ``flat band'' scattering behaviour. That is, the probability of transmission of a particle from an edge (we call the edge ``input'') to another given edge (called ``output'') turned out to be constant for energies $E$ in the interval $(0,V)$ and quickly descending towards zero for $E>V$. Consequently, particles with energies exceeding the controlling potential $V$ mostly cannot pass to the output edge. The vertex thus works as a controllable band-pass filter with a flat passband. In this paper we will deal with the problem more thoroughly. We will prove that this behaviour can occur only for certain subfamilies of vertex couplings, but at the same time we will demonstrate that there exists a multiparametric family of vertex couplings with the ``flat passband'' property. In other words, such a behaviour is less rare than it might seem earlier. 

The paper is organized as follows. In Section~\ref{Section: Preliminaries} we bring together elementary facts and notation on vertex couplings and scattering in quantum graph vertices. Section~\ref{Sect.Filter} presents the concept of a controllable band-pass filter and the goals of the paper, as well as the idea of solution. Sections~\ref{Sect.1}--\ref{Sect.3} are devoted to the existence of quantum graph filters featuring flat passbands. The main result is presented in Section~\ref{Sect.2}, in which several designs are proposed.

\section{Preliminaries}\label{Section: Preliminaries}

A wave function of a particle confined to a star graph having $n$ arms consists of $n$ components, $\Psi=(\psi_1,\psi_2,\ldots,\psi_n)$. The coordinate on each arm is chosen such that $0$ corresponds to the center of the star graph and the variable grows in the outgoing direction. If there are potentials $V_1,\ldots,V_n$ imposed on the arms, the Hamiltonian acts as $\psi_j\mapsto-\psi_j''+V_j\psi_j$ at each arm $j=1,\ldots,n$ (we choose the units so that $\hbar=2m=1$ for $m$ being the mass of the particle).

Properties of the vertex are determined by boundary conditions that are conventionally written in the form
\begin{equation}\label{b.c.}
A\Psi(0)+B\Psi'(0)=0\,,
\end{equation}
where
\begin{equation}
\Psi(0)=
\begin{pmatrix}
\psi_1(0) \\
\vdots \\
\psi_n(0)
\end{pmatrix}
\qquad\text{and}\qquad
\Psi'(0)=
\begin{pmatrix}
\psi'_1(0) \\
\vdots \\
\psi'_n(0)
\end{pmatrix}
\end{equation}
are the boundary vectors and $A,B$ are complex $n\times n$ matrices satisfying
\begin{equation}\label{KS}
\rank(A|B)=n\,, \qquad AB^*=BA^*\,,
\end{equation}
cf.~\cite{KS99}. The symbol $(A|B)$ denotes the $n\times2n$ matrix formed from columns of $A$ and $B$.

In this paper we will take advantage of the so-called $ST$-form of boundary conditions~\cite{CET10}, in which requirements~\eqref{KS} are implicitly satisfied due to a special choice of $A$ and $B$. Namely, the $ST$-form relies on the block decomposition of $A$ and $B$,
\begin{equation}\label{ST}
\begin{pmatrix}
I^{(r)} & T \\
0 & 0
\end{pmatrix}
\Psi'(0)=
\begin{pmatrix}
S & 0 \\
-T^* & I^{(n-r)}
\end{pmatrix}
\Psi(0)
\end{equation}
for a certain $r\in\{0,1,\ldots,n\}$. Matrix $T$ is a general complex $r\times n-r$ matrix, $S$ is a Hermitian matrix of order $r$, and $I^{(r)},I^{(n-r)}$ are identity matrices of given orders. The value $r$ corresponds to $\rank(B)$ in boundary conditions \eqref{b.c.}.

If a wave corresponding to a quantum particle with energy $E$ reaches the vertex from the $\ell$-th line with amplitude $1$, the wave is reflected with a complex amplitude $\Sc_{jj}(E)$ and transmitted to the lines nos.~$\ell\neq j$ with complex amplitudes $\Sc_{j\ell}(E)$. The scattering amplitudes form the scattering matrix of the vertex. The scattering matrix, denoted by $\Sc(E)$, is an $n\times n$ matrix function of particle energy that is given by the formula
\begin{equation}\label{S}
\Sc(E) = - (A + \im\sqrt{E} B)^{-1}(A - \im\sqrt{E} B)\,.
\end{equation}
Let us emphasize that formula~\eqref{S} applies only if $V_j=0$ for all $j=1,2,\ldots,n$.
If we substitute
$$
A=-\begin{pmatrix}
S & 0 \\
-T^* & I^{(n-r)}
\end{pmatrix}
\quad\text{and}\quad
B=\begin{pmatrix}
I^{(r)} & T \\
0 & 0
\end{pmatrix}
$$
into equation~\eqref{ST}, we obtain the scattering matrix expressed in terms of the $ST$-form of boundary conditions,
\begin{equation}\label{S ST}
\Sc(E)=-I^{(n)}+
2\begin{pmatrix} I^{(r)} \\ T^* \end{pmatrix}
\left(I^{(r)}+TT^*-\frac{1}{\im\sqrt{E}}S\right)^{-1}
\begin{pmatrix} I^{(r)} & T\end{pmatrix}\,;
\end{equation}
cf.~\cite{CET10b}. It is straightforward to see from formula~\eqref{S ST} that the scattering matrix is constant with respect to $E$ if and only if the matrix $S$ in the $ST$-form of boundary conditions~\eqref{ST} vanishes, i.e., when 
\begin{equation}\label{FT}
\begin{pmatrix}
I^{(r)} & T \\
0 & 0
\end{pmatrix}
\Psi'(0)=
\begin{pmatrix}
0 & 0 \\
-T^* & I^{(n-r)}
\end{pmatrix}
\Psi(0)\,.
\end{equation}
Vertex couplings having energy-independent scattering matrices are called scale invariant couplings. They are widely studied; see~\cite{FT00,NS00,SS02,CT10}.

\section{A potential-controlled filter}\label{Sect.Filter}

Consider a quantum star graph with $n$ edges. We will regard one of the edges as \emph{input}, another edge as \emph{output}. The remaining $n-2$ edges will be assumed to be of two types, see Figure~\ref{Filter}:

\begin{itemize}
\item Lines with constant nonzero potentials (\emph{``controlling lines''});
\item lines without potentials (\emph{``drains''}).
\end{itemize}
\begin{figure}[h]
\begin{center}
\includegraphics[angle=0,width=0.4\textwidth]{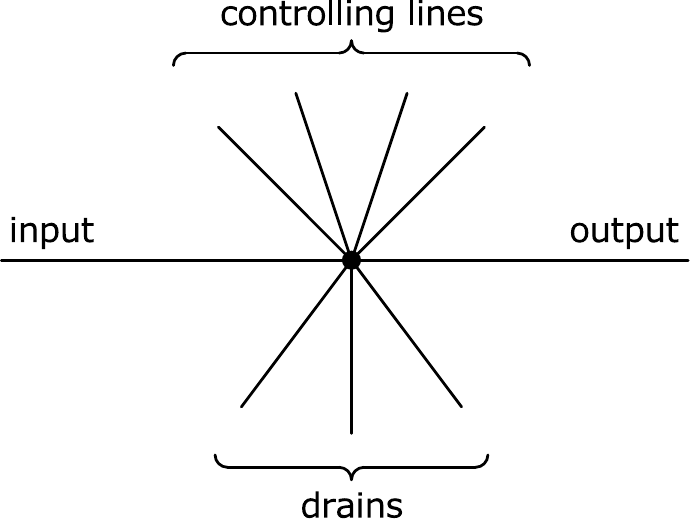}
\caption{A controllable quantum graph filter}\label{Filter}
\end{center}
\end{figure}
For a particle coming in the vertex along the input line with energy $E$, we denote the complex transmission amplitude to the output line by the symbol $\Trans(E)$. The corresponding transmission probability in the channel is $\Prob(E)=|\Trans(E)|^2$. This paper is concerned with the relation between the transmission probability in the input-output channel and the potentials on the controlling lines. More specifically,
we will search for couplings that can serve as controllable band-pass filters with flat passbands. That is, the function $\Prob(E)$ is required to have the following three properties, cf.~Figure~\ref{Fig.passband}.
\begin{gather}
\Prob(E)=\const>0 \quad \text{for } E\in(0,V) \text{ for a certain } V>0; \label{const} \\
\Prob(E) \text{ quickly decreases when $E$ exceeds $V$, i.e., } \lim_{E\searrow V}\Prob'(E)=-\infty; \label{drop} \\
\lim_{E\to\infty}\Prob(E)=0. \label{lim=0}
\end{gather}
We assume that $V$ is the value of the constant potential on the controlling lines.
\begin{figure}[h]
\begin{center}
\includegraphics[angle=0,width=0.45\textwidth]{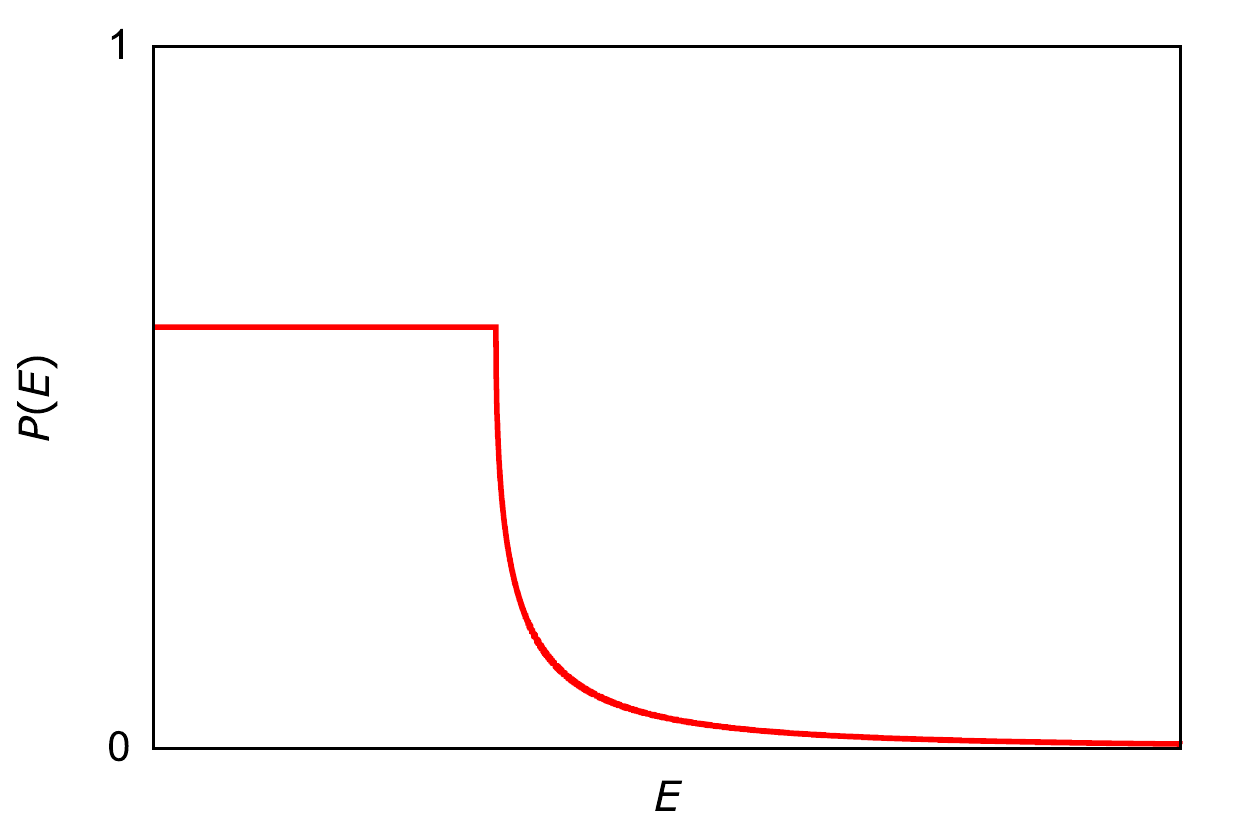}
\caption{An example of sought transmission probability}\label{Fig.passband}
\end{center}
\end{figure}

In general, the transmission amplitude in the input-output channel is given by the term $[\Sc(E)]_{oi}$ of the scattering matrix. However, formula~\eqref{S} cannot be applied straightforwardly, because the controlling lines are subject to constant potentials $V_j\neq0$. Therefore, we will approach the problem as follows. At first we transform the original boundary conditions in the vertex of degree $n$ to boundary conditions in a vertex of degree $2$. This step is based on the idea that the controlling lines and drains support only outgoing waves, thus the corresponding wave function components are multiples of $\exp^{\im k_j x}$, where
\begin{equation}\label{k_j}
k_j=\begin{cases}
\sqrt{E-V_j} & \text{if } E>V_j\,; \\
\im\sqrt{V_j-E} & \text{if } E<V_j
\end{cases}
\end{equation}
is the momentum on the $j$-th line with potential $V_j$.
Relation $\psi_j(x)\propto\exp^{\im k_j x}$ implies
\begin{equation}\label{proportional}
\psi_j'(0)=\im k_j\psi_j(0) \quad\text{for all } j\neq i, j\neq o\,.
\end{equation}
Equation~\eqref{proportional} allows us to eliminate the boundary values $\psi_j(0)$ and $\psi_j'(0)$ at all controlling edges and drains from boundary conditions~\eqref{ST}. We obtain boundary conditions in a vertex of degree $2$ that connect just the input and the output,
\begin{equation}\label{dissipative}
A_\diss\Psi_{io}+B_\diss\Psi_{io}'=0\,,
\end{equation}
where
\begin{equation}\label{Psi_io}
\Psi_{io}=\begin{pmatrix}\psi_i(0) \\ \psi_o(0) \end{pmatrix}\,; \qquad \Psi'_{io}=\begin{pmatrix}\psi_i'(0) \\ \psi_o'(0) \end{pmatrix}\,,
\end{equation}
are boundary values at input and output line. We emphasize that $A_\diss,B_\diss$ are $2\times2$ matrices that generally do \emph{not} obey the requirements~\eqref{KS}, because $AB^*=BA^*$ can be broken due to the dissipation in the vertex, manifested through ``hidden'' drains and controllers. On the other hand, since neither input line nor the output line support a potential, formula~\eqref{S} applies without reserve. Once we subsitute matrices $A_\diss, B_\diss$ from reduced boundary conditions~\eqref{dissipative} into equation~\eqref{S}, we obtain the $2\times2$ scattering matrix that characterizes wave propagation in the input-output channel. In particular, the $(2,1)$-term of the matrix is the sought transmission amplitude $\Trans(E)$.

For the derivation of matrices $A_\diss,B_\diss$, we will take advantage of the $ST$-form of boundary conditions. Therefore, the calculation depends on the parameter $r$. In the following section we begin with the case $r=1$.

\section{Case $r=1$}\label{Sect.1}

The $ST$-form~\eqref{ST} of boundary conditions for $r=1$ uses matrices $T=(t_2 \; t_3 \; \cdots \; t_n)$ and $S=(s)$. We may assume without loss of generality that line no.~1 is the input and line no.~2 is the output. Let us follow the steps outlined in Section~\ref{Sect.Filter}. After eliminating $\psi_3(0),\ldots,\psi_n(0)$ from the system using identities~\eqref{proportional}, we get dissipative boundary conditions~\eqref{dissipative} with
$$
A_\diss=
-\begin{pmatrix}
s-\im\sum_{j=3}^n\frac{k_j}{\sqrt{E}}|t_j|^2 & 0 \\
-\overline{t_2} & 1
\end{pmatrix}
\,,\qquad
B_\diss=
\begin{pmatrix}
1 & t_2 \\
0 & 0
\end{pmatrix}\,.
$$
When we substitute $A_\diss,B_\diss$ into formula~\eqref{S}, we obtain the scattering matrix describing the input-output interface,
$$
\Sc_\diss(E)=-I+\frac{2}{1+|t_2|^2-\frac{s}{\im\sqrt{E}}+\sum_{j=3}^{n}\frac{k_j}{\sqrt{E}}|t_j|^2}\begin{pmatrix}
1 & t_2 \\
\overline{t_2} & |t_2|^2
\end{pmatrix}\,.
$$
The transmission amplitude is given as the term $[\Sc_\diss(E)]_{21}$, i.e.,
\begin{equation}\label{T1}
\Trans(E)=\frac{2\overline{t_2}}{1+|t_2|^2-\frac{s}{\im\sqrt{E}}+\sum_{j=3}^{n}\frac{k_j}{\sqrt{E}}|t_j|^2}\,.
\end{equation}

Now we are ready to check whether $S$ and $T$ can be chosen such that the function $\Prob(E)=|\Trans(E)|^2$ satisfies conditions~\eqref{const}--\eqref{lim=0}. Condition~\eqref{lim=0} is equivalent to $\lim_{E\to\infty}\Trans(E)=0$. Equation~\eqref{k_j} implies $\lim_{E\to\infty}\frac{k_j}{\sqrt{E}}=1$ for all $j=3,\ldots,n$; hence
$$
\lim_{E\to\infty}\Trans(E)=\frac{2\overline{t_2}}{1+|t_2|^2+\sum_{j=3}^{n}|t_j|^2}\,.
$$
Consequently
$$
\lim_{E\to\infty}\Trans(E)=0 \quad\Leftrightarrow\quad t_2=0\,.
$$
However, the choice $t_2=0$ implies $\Trans(E)=0$ for all $E>0$ (cf.~\eqref{T1}), which contradicts condition~\eqref{const}. (In physical terms, $t_2=0$ corresponds to a vertex with line no.~$2$ completely decoupled.) To sum up, conditions~\eqref{lim=0} and~\eqref{const} cannot be satisfied at the same time. We conclude that a band-pass filter with flat passband cannot be constructed using a vertex coupling with $r=1$.

\section{Case $r\geq2$ with linear dependence}\label{Sect.LD}

Now we consider boundary conditions~\eqref{b.c.} with $r=\rank(B)\geq2$ such that the columns of $B$ corresponding to the input and output are linearly dependent. We can assume without loss of generality that the input corresponds to line no.~1 and the output is line no.~$n$. When the boundary conditions are written in the $ST$-torm, the linear dependence implies that the last column of $T$ is a transposition of the vector $(t,0,\ldots,0)$ for a certain $t\neq0$. Therefore, the $ST$-form of boundary conditions reads as follows,
\begin{equation}\label{STdecLD}
\begin{pmatrix}
1 & 0 & T_1 & t \\
0 & I^{(r-1)} & T_2 & 0 \\
0 & 0 & 0 & 0 \\
0 & 0 & 0 & 0
\end{pmatrix}
\begin{pmatrix}
\psi_i' \\
\Psi'_{cd} \\
\psi_o'
\end{pmatrix}
=
\begin{pmatrix}
s & S_2 & 0 & 0 \\
S_2^* & S_4 & 0 & 0 \\
-T_1^* & -T_2^* & I^{(n-r-1)} & 0 \\
-\bar{t} & 0 & 0 & 1
\end{pmatrix}
\begin{pmatrix}
\psi_i \\
\Psi_{cd} \\
\psi_o
\end{pmatrix}
\end{equation}
where $I^{(r-1)},I^{(n-r-1)}$ are identity matrices of given orders and $T=\begin{pmatrix} T_1 & t \\
T_2 & 0\end{pmatrix}$, $S=\begin{pmatrix}s & S_2 \\ S_2^* & S_4\end{pmatrix}$. 
Recall that symbols $\psi_i,\psi_i'$ and $\psi_o,\psi_o'$ denote boundary values at the input and output line, respectively. Symbols
$$
\Psi_{cd}=\begin{pmatrix}\psi_{2}(0) \\ \vdots \\ \psi_{n-1}(0) \end{pmatrix}\,; \qquad \Psi'_{cd}=\begin{pmatrix}\psi_{2}'(0) \\ \vdots \\ \psi_{n-1}'(0) \end{pmatrix}
$$
stand for boundary vectors at controlling edges and drains.

Values $\psi_j(0),\psi_j'(0)$ obey relations~\eqref{proportional}, i.e.,
\begin{equation}\label{K_ld}
\Psi'_{cd}=\im\begin{pmatrix}K_2 & 0 \\ 0 & K_3\end{pmatrix}\Psi_{cd}
\end{equation}
for $K_2=\diag(k_2,\ldots,k_r)$ and $K_3=\diag(k_{r+1},\ldots,k_{n-1})$. We use identity~\eqref{K_ld} to eliminate $\Psi_{cd}$ and $\Psi'_{cd}$ from system~\eqref{STdecLD}. In this way we obtain boundary conditions~\eqref{dissipative} with
\begin{equation}\label{AB diss}
A_\diss=
-\begin{pmatrix}
f & 0 \\
-\bar{t} & 1
\end{pmatrix}
\,,\qquad
B_\diss=
\begin{pmatrix}
1 & t \\
0 & 0
\end{pmatrix}\,,
\end{equation}
where
$$
f=s-\im T_1K_3T_1^*+(S_2-\im T_1K_3T_2^*)(\im K_2+\im T_2K_3T_2^*-S_4)^{-1}(S_2^*-\im T_2K_3T_1^*)\,.
$$
The dissipative scattering matrix corresponding to matrices~\eqref{AB diss} is
$$
\Sc_\diss(E)=-I+\frac{2}{1+|t|^2-\frac{f}{\im\sqrt{E}}}
\begin{pmatrix}
1 & t \\
\overline{t} & |t|^2
\end{pmatrix}.
$$
The transmission amplitude thus equals
$$
\Trans(E)=\frac{2\bar{t}}{1+|t|^2-\frac{f}{\im\sqrt{E}}}\,.
$$
Now we check condition~\eqref{lim=0}. Since $\lim_{E\to\infty}\frac{1}{\sqrt{E}}K_2=I^{(r-1)}$ and $\lim_{E\to\infty}\frac{1}{\sqrt{E}}K_3=I^{(n-r-1)}$, we have
$$
\lim_{E\to\infty}\frac{-f}{\im\sqrt{E}}=T_1T_1^*-T_1T_2^*(I+T_2T_2^*)^{-1}T_2T_1^*=T_1(I+T_2^*T_2)^{-1}T_1^*\,.
$$
Hence
$$
\lim_{E\to\infty}\Trans(E)=\frac{2\overline{t}}{1+|t|^2+T_1(I+T_2^*T_2)^{-1}T_1^*}\,.
$$
Note that $T_1(I+T_2^*T_2)^{-1}T_1^*$ is a non-negative number for any choice of $T_1,T_2$. Therefore, condition~\eqref{lim=0} is equivalent to $t=0$.
However, it is easy to see from boundary conditions~\eqref{STdecLD} that $t=0$ corresponds to a completely decoupled output, which implies $\Trans(E)=0$ for all $E>0$. In other words, conditions~\eqref{lim=0} and~\eqref{const} are contradictory.
We conclude that a vertex coupling cannot serve for the construction of a band-pass filter with flat passband if the columns of $B$ corresponding to the input and output edge are linearly dependent.

\section{Case $r=2$}\label{Sect.2}

This section is focused on the case $r=2$, i.e., $\rank(B)=2$. With regard to the result of Section~\ref{Sect.LD}, we may assume that the columns of matrix $B$ that correspond to the input and output line are linearly independent. Without loss of generality, we associate the input and output with lines no.~$1$ and no.~$2$, respectively. The boundary conditions in the vertex are expressed in the $ST$-form as follows,
\begin{equation}\label{ST2}
\begin{pmatrix}
I^{(2)} & T \\
0 & 0
\end{pmatrix}
\begin{pmatrix}
\Psi'_{io} \\
\Psi'_{cd}
\end{pmatrix}
=
\begin{pmatrix}
S & 0 \\
-T^* & I^{(n-2)}
\end{pmatrix}
\begin{pmatrix}
\Psi_{io} \\
\Psi_{cd}
\end{pmatrix}
\,,
\end{equation}
where $S$ is a Hermitian $2\times2$ matrix, $T\in\CC^{2,n-2}$, $I^{(2)},I^{(n-2)}$ are identity matrices of appropriate sizes, $\Psi_{io},\Psi'_{io}$ are the boundary values at input and output line (cf.~\eqref{Psi_io}), and
$$
\Psi_{cd}=\begin{pmatrix}\psi_{r+1}(0) \\ \vdots \\ \psi_n(0) \end{pmatrix}\,; \qquad \Psi'_{cd}=\begin{pmatrix}\psi_{r+1}'(0) \\ \vdots \\ \psi_n'(0) \end{pmatrix}\,.
$$
are the boundary values at controllers and drains.
Relations~\eqref{k_j} imply
\begin{equation}\label{K 2}
\Psi'_{cd}=\im K\Psi_{cd}\,,
\end{equation}
where
$$
K=\diag(k_3,\ldots,k_n)\,.
$$
Identity~\eqref{K 2} allows to eliminate $\Psi_{cd}$ and $\Psi'_{cd}$ from system~\eqref{ST2}. We arrive at dissipative boundary conditions connecting just the input and output,
$$
\Psi'_{io}=(S-\im TKT^*)\Psi_{io}\,.
$$
Formula~\eqref{S} applied on $A_\diss=S-\im TKT^*$ and $B_\diss=I$ leads to the scattering matrix
\begin{equation}\label{Sred2}
\Sc_\diss(E)=-I+\frac{2}{1+\Tr(M(E))+\det(M(E))}\adj(M(E))
\end{equation}
for
\begin{equation}\label{M2}
M(E)=TDT^*-\frac{1}{\im\sqrt{E}}S\,,
\end{equation}
where $D=\diag\left(\frac{k_3}{\sqrt{E}},\cdots,\frac{k_n}{\sqrt{E}}\right)$ for $k_j$ defined in~\eqref{k_j}.
In particular, the transmission amplitude is
\begin{equation}\label{T2}
\Trans(E)=\frac{-2[M(E)]_{21}}{1+\Tr(M(E))+\det(M(E))}\,.
\end{equation}

Once we have derived formula~\eqref{T2}, our next goal is to find requirements on $S$ and $T$ to satisfy conditions~\eqref{const}--\eqref{lim=0}. We may assume without loss of generality that the controlling lines are given numbers~$3,\ldots,q$ for a certain $q\in[3,\ldots,n]$, and edges nos.~$q+1,\ldots,n$ represent drains. We write the matrix $T\in\CC^{2,n-2}$ accordingly in the way
\begin{equation}\label{Tvw}
T=
\begin{pmatrix}
v_1 & w_1 \\
v_2 & w_2
\end{pmatrix}
\end{equation}
with $v_1,v_2\in\CC^{1,q-2}$ and $w_1,w_2\in\CC^{1,n-q}$, where $q-2$ is the number of controllers.
We start from condition~\eqref{lim=0}, i.e., $\lim_{E\to\infty}\Trans(E)=0$. Since $\lim_{E\to\infty}D=I$, equation~\eqref{M2} gives
$$
\lim_{E\to\infty}M(E)=TT^*=
\begin{pmatrix}\|v_1\|^2+\|w_2\|^2 & v_1v_2^*+w_1w_2^* \\ v_2v_1^*+w_2w_1^* & \|v_2\|^2+\|w_2\|^2\end{pmatrix}\,.
$$
Matrix $TT^*$ is Hermitian and positive-definite; thus $\lim_{E\to\infty}\det(M(E))>0$ and $\lim_{E\to\infty}\Tr(M(E))>0$. Consequently, with regard to equation~\eqref{T2}, we have $\lim_{E\to\infty}\Trans(E)=0$ if and only if
\begin{equation}\label{diagonality}
v_2v_1^*+w_2w_1^*=0.
\end{equation}
Now we proceed to condition~\eqref{const}. If the controlling lines support a potential $V$, the matrix $D$ for energies $E\in(0,V)$ equals
$$
D=
\begin{pmatrix}
\im\sqrt{\frac{V}{E}-1}\cdot I^{(q-2)} & 0 \\
0 & I^{(n-q)}
\end{pmatrix}\,.
$$
Formula~\eqref{T2} gives the transmission amplitude for $E\in(0,V)$ in the form
$$
\Trans(E)=\frac{-2\left(w_2w_1^*+\im\sqrt{\frac{V}{E}-1}\cdot v_2v_1^*+\frac{\im}{\sqrt{E}}s_{21}\right)}{a+b\cdot\im\sqrt{\frac{V}{E}-1}+c\cdot\left(\frac{V}{E}-1\right)+d\cdot\sqrt{\frac{V}{E^2}-\frac{1}{E}}+f\cdot\frac{\im}{\sqrt{E}}+\frac{g}{E}}
$$
with
\begin{align*}
a&=1+\|w_1\|^2+\|w_2\|^2+\|w_1\|^2\cdot\|w_2\|^2-|w_2w_1^*|^2\,; \\
b&=\|v_1\|^2+\|v_2\|^2+\|w_1\|^2\cdot\|v_2\|^2+\|v_1\|^2\cdot\|w_2\|^2-v_2v_1^*w_1w_2^*-w_2w_1^*v_1v_2^*\,; \\
c&=-\|v_1\|^2\cdot\|v_2\|^2+|v_2v_1^*|^2\,; \\
d&=-s_{11}\|v_2\|^2-s_{22}\|v_1\|^2+2\Re(s_{21}v_1v_2^*)\,; \\
f&=\Tr(S)+s_{11}\|w_2\|^2+s_{22}\|w_1\|^2-2\Re(s_{21}w_1w_2^*)\,; \\
g&=-\det(S)\,.
\end{align*}
Expressions for $b$ and $\Trans(E)$ can be simplified using equation~\eqref{diagonality},
$$
b=\|v_1\|^2+\|v_2\|^2+\|w_1\|^2\cdot\|v_2\|^2+\|v_1\|^2\cdot\|w_2\|^2+2|v_2v_1^*|^2\,;
$$
\begin{equation}\label{T(E) 2}
\Trans(E)=\frac{2\left(1-\im\sqrt{\frac{V}{E}-1}\right)v_2v_1^*-2\frac{\im}{\sqrt{E}}s_{21}}{a+b\cdot\im\sqrt{\frac{V}{E}-1}+c\cdot\left(\frac{V}{E}-1\right)+d\cdot\sqrt{\frac{V}{E^2}-\frac{1}{E}}+f\cdot\frac{\im}{\sqrt{E}}+\frac{g}{E}}\,.
\end{equation}

\begin{lemma}\label{Lem.nonzero}
Condition~\eqref{const} implies $v_2v_1^*\neq0$.
\end{lemma}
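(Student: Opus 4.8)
The plan is to prove the contrapositive: assuming $v_2v_1^*=0$, I will show that $\Prob(E)$ cannot be a positive constant on an interval $(0,V)$. First I substitute $v_2v_1^*=0$ into the simplified transmission amplitude~\eqref{T(E) 2}. Since \eqref{diagonality} then forces $w_2w_1^*=0$ as well, the whole first summand of the numerator disappears and what remains is
$$
\Trans(E)=\frac{-2\frac{\im}{\sqrt E}s_{21}}{a+b\cdot\im\sqrt{\tfrac VE-1}+c\cdot\left(\tfrac VE-1\right)+d\cdot\sqrt{\tfrac{V}{E^2}-\tfrac1E}+f\cdot\frac{\im}{\sqrt E}+\tfrac gE}\,.
$$
If $s_{21}=0$ the amplitude vanishes identically, contradicting $\Prob>0$; hence $s_{21}\neq0$, and writing $\mathrm{denom}(E)$ for the displayed denominator we have $\Prob(E)=\frac{4|s_{21}|^2}{E\,|\mathrm{denom}(E)|^2}$. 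Condition~\eqref{const} is thus equivalent to $E\,|\mathrm{denom}(E)|^2$ being constant on $(0,V)$.

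The core of the argument is to disentangle the two algebraically independent quantities hidden in the denominator. For $E\in(0,V)$ I set $u=\sqrt{V/E-1}\in(0,\infty)$ and $p=\sqrt{u^2+1}=\sqrt{V/E}$, so that every term of $\mathrm{denom}(E)$ becomes a real combination of $1,u,u^2,p,up$. All of $a,\dots,g$ are real here: $S$ being Hermitian gives $s_{11},s_{22},\det S\in\RR$, while under $v_2v_1^*=w_2w_1^*=0$ the mixed terms $\Re(s_{21}v_1v_2^*)$ and $\Re(s_{21}w_1w_2^*)$ in $d$ and $f$ vanish. Splitting the real and imaginary parts of $\mathrm{denom}(E)$ as $X_j(u)+p\,Y_j(u)$ with polynomial $X_j,Y_j$, I obtain
$$
|\mathrm{denom}(E)|^2=\big(X_1^2+X_2^2+(Y_1^2+Y_2^2)(u^2+1)\big)+2\,p\,(X_1Y_1+X_2Y_2)\,.
$$
Because $p=\sqrt{u^2+1}$ is irrational over $\RR(u)$, the requirement $|\mathrm{denom}(E)|^2=\const\cdot(u^2+1)$ separates into the vanishing of the $p$-part, $X_1Y_1+X_2Y_2\equiv0$, together with a polynomial identity for the rational part.

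From here I would compare coefficients in $u$. The leading $u^4$-coefficient of the rational part is $(c+g/V)^2+d^2/V$, a sum of two non-negative reals; its vanishing forces $d=0$ and $c+g/V=0$ simultaneously. Feeding this back, the $p$-part identity collapses to the scalar condition $bf=0$, and matching the remaining $u^2$- and constant coefficients yields $b^2=(a+g/V)^2$.

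The hard part will be closing the resulting degenerate system. I must substitute $d=0$, $c+g/V=0$, $bf=0$, $b^2=(a+g/V)^2$ back into the explicit forms of $a,\dots,g$ in terms of $\|v_1\|,\|v_2\|,\|w_1\|,\|w_2\|$ and the entries of $S$, and then exploit the full strength of the structural constraints — Hermiticity of $S$ and the Cauchy--Schwarz identity $c=|v_2v_1^*|^2-\|v_1\|^2\|v_2\|^2=-\|v_1\|^2\|v_2\|^2$ valid precisely when $v_2v_1^*=0$ — to exclude the borderline cancellations in which the top-order $1/E$ contribution to $|\mathrm{denom}|^2$ exactly cancels. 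This compatibility analysis, rather than the separation step, is where the assumption $v_2v_1^*\neq0$ must ultimately reveal itself as indispensable, and it is the step I expect to require the most care; a cruder route via the $E\searrow0$ asymptotics of $\Prob$ handles only the generic case and leaves exactly these cancellation configurations open.
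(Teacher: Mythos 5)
Your reduction is correct as far as it goes, and it is in fact more careful than the paper's own proof: after substituting $v_2v_1^*=w_2w_1^*=0$, the paper merely notes $a\neq0$ and then \emph{asserts} that $|\Trans(E)|^2$ is ``either identically zero (for $s_{21}=0$), or non-constant,'' whereas you correctly recognize that the non-constancy claim needs an argument. Your separation into rational and irrational parts over $\RR(u)$ and the resulting necessary conditions $d=0$, $c+g/V=0$, $bf=0$, $b^2=(a+g/V)^2$ are all sound. The genuine gap is the step you postpone --- and that gap cannot be closed, because the system you derived \emph{is} solvable with $v_2v_1^*=0$ and $s_{21}\neq0$. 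Concretely, take $n=4$ with two controlling lines and no drains ($q=n$, so $w_1,w_2$ are empty), and
$$
T=\begin{pmatrix}1&0\\0&1\end{pmatrix},\qquad S=\sqrt{V}\begin{pmatrix}0&1\\1&0\end{pmatrix},
$$
i.e.\ $v_1=(1,0)$, $v_2=(0,1)$, $s_{21}=\sqrt{V}$. Then \eqref{diagonality} holds, $v_2v_1^*=0$, and $a=1$, $b=2$, $c=-1$, $d=f=0$, $g=-\det S=V$, so all four of your conditions are met. Direct evaluation of \eqref{M2}--\eqref{T2} gives, for $E\in(0,V)$, $\det M(E)=1$, $[M(E)]_{21}=\im\sqrt{V/E}$ and $1+\Tr(M(E))+\det(M(E))=2\bigl(1+\im\sqrt{V/E-1}\bigr)$, hence $\Prob(E)\equiv1$ on $(0,V)$; one checks that \eqref{drop} and \eqref{lim=0} hold as well. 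So condition \eqref{const}, read literally (``for a certain $V>0$''), does \emph{not} imply $v_2v_1^*\neq0$; no amount of care in the final compatibility analysis can rescue the implication. (The same example shows that the paper's one-sentence dismissal of the case $s_{21}\neq0$ is unjustified, and it even conflicts with Theorems~\ref{Thm.flat} and~\ref{Thm.max} as stated.)

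What makes the lemma true --- and lets your computation finish --- is the stronger, physically intended reading: the coupling $(S,T)$ is fixed once and for all, and flatness is required to persist as the controlling potential $V$ is varied (in the example above, $S$ is tuned to one particular value of $V$, so the device is not a \emph{controllable} filter). Imposing your condition $cV+g=0$ for two distinct values of $V$ forces $c=g=0$. Then $c=-\|v_1\|^2\|v_2\|^2=0$ gives $v_1=0$ or $v_2=0$; if, say, $v_1=0\neq v_2$, then $d=-s_{11}\|v_2\|^2=0$ gives $s_{11}=0$, and $g=0$ gives $|s_{21}|^2=s_{11}s_{22}=0$; if $v_1=v_2=0$, then $g=0$ cancels the $1/E$ terms in $\det M$, leaving $\Prob(E)=4|s_{21}|^2\big/\bigl(\beta E+\gamma\bigr)$ with $\beta=(1+\|w_1\|^2)^2(1+\|w_2\|^2)^2>0$, which is non-constant unless $s_{21}=0$. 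Either way \eqref{const} fails, completing the contrapositive. My recommendation: prove the lemma in this $V$-uniform formulation (your coefficient identities are exactly the right tool for it), or record the counterexample; as literally stated, the claim cannot be proved because it is not true.
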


\begin{proof}
We prove the statement by showing that $v_2v_1^*=0$ contradicts~\eqref{const}. Equation $v_2v_1^*=0$ implies $w_2w_1^*=0$ due to equation~\eqref{diagonality}. Therefore, for all $E\in(0,V)$,
$$
\Trans(E)=\frac{-2\frac{\im}{\sqrt{E}}s_{21}}{a+b\cdot\im\sqrt{\frac{V}{E}-1}+c\cdot\left(\frac{V}{E}-1\right)+d\cdot\sqrt{\frac{V}{E^2}-\frac{1}{E}}+f\cdot\frac{\im}{\sqrt{E}}+\frac{g}{E}}\,.
$$
Note that $a=1+\|w_1\|^2+\|w_2\|^2+\|w_1\|^2\cdot\|w_2\|^2\neq0$. Therefore, function $|\Trans(E)|^2$ is either identically zero (for $s_{21}=0$), or non-constant. In both cases condition~\eqref{const} is violated.
\end{proof}
With regard to Lemma~\ref{Lem.nonzero}, we may assume
\begin{equation}\label{nonzero}
v_1\neq0\,,\quad v_2\neq0\,,\quad w_1\neq0\,,\quad w_2\neq0\,.
\end{equation}
We see from the structure of the numerator and the denominator in equation~\eqref{T(E) 2} that satisfying condition~\eqref{const} for all $E<V$ requires
\begin{equation}\label{conditions}
c=0\,,\quad d=0\,,\quad g=0\,,\quad |a|=|b|\,,\quad \frac{f}{b}=\frac{s_{21}}{v_2v_1^*}\,.
\end{equation}
Indeed, when~\eqref{conditions} hold true, we have
$$
\Trans(E)=\frac{2v_2v_1^*}{a}\cdot\frac{1-\im\left(\sqrt{\frac{V}{E}-1}+\frac{s_{21}}{v_2v_1^*}\cdot\frac{1}{\sqrt{E}}\right)}{1+\im\left(\sqrt{\frac{V}{E}-1}+\frac{s_{21}}{v_2v_1^*}\cdot\frac{1}{\sqrt{E}}\right)} \quad\text{for all } E\in(0,V)\,;
$$
hence
\begin{equation}\label{P.2}
\Prob(E)=\left(\frac{2|v_2v_1^*|}{1+\|w_1\|^2+\|w_2\|^2+\|w_1\|^2\cdot\|w_2\|^2-|v_2v_1^*|^2}\right)^2=\const.
\end{equation}
for $E\in(0,V)$.
Now we will examine the system of conditions~\eqref{conditions}. We start from equation $c=0$, which is equivalent to
\begin{equation}\label{lin_dep}
\|v_1\|^2\cdot\|v_2\|^2=|v_2v_1^*|^2\,.
\end{equation}
Due to Cauchy--Schwarz inequality, $v_1,v_2$ are linearly dependent vectors. Furthermore, equation~\eqref{diagonality} together with~\eqref{lin_dep} implies
\begin{equation}\label{|w|}
|w_2w_1^*|=\|v_1\|\cdot\|v_2\|\,.
\end{equation}
Let us proceed to another condition from~\eqref{conditions},  $|a|=|b|$. By virtue of equation~\eqref{|w|}, we can rewrite $|a|=|b|$ in the form
\begin{multline}\label{2nd cond.}
1+\|w_1\|^2+\|w_2\|^2+\|w_1\|^2\cdot\|w_2\|^2-\|v_1\|^2\|v_2\|^2 \\
=\|v_1\|^2+\|v_2\|^2+\|w_1\|^2\cdot\|v_2\|^2+\|v_1\|^2\cdot\|w_2\|^2+2\|v_1\|^2\|v_2\|^2\,,
\end{multline}
which is equivalent to
\begin{equation}\label{simplified}
\left(1+\|w_1\|^2-\|v_1\|^2\right)\left(1+\|w_2\|^2-\|v_2\|^2\right)=4\|v_1\|^2\|v_2\|^2\,.
\end{equation}
We continue to condition $g=0$, which gives
\begin{equation}\label{|s21|}
|s_{21}|=\sqrt{|s_{11}s_{22}|}\,.
\end{equation}
We proceed in~\eqref{conditions} to condition $\frac{f}{b}=\frac{s_{21}}{v_2v_1^*}$. This condition implies in particular that $\frac{s_{21}}{v_2v_1^*}\in\RR$. If we combine this fact with equations~\eqref{|s21|} and~\eqref{lin_dep}, we find
\begin{equation}\label{s21}
s_{21}=\pm\sqrt{|s_{11}s_{22}|}\frac{v_2v_1^*}{\|v_2\|\cdot\|v_1\|}\,.
\end{equation}

Result~\eqref{s21} gives $\frac{s_{21}}{v_2v_1^*}=\pm\frac{\sqrt{|s_{11}s_{22}|}}{\|v_1\|\cdot\|v_2\|}$.
Therefore, condition $\frac{f}{b}=\frac{s_{21}}{v_2v_1^*}$ is equivalent to $\pm f\cdot\|v_1\|\cdot\|v_2|=b\sqrt{|s_{11}s_{22}|}$, i.e.,
\begin{multline}\label{a=b}
\pm\left(\Tr(S)+s_{11}\|w_2\|^2+s_{22}\|w_1\|^2-2\Re(s_{21}w_1w_2^*)\right)\cdot\|v_1\|\cdot\|v_2\| \\
=\left(\|v_1\|^2+\|v_2\|^2+\|w_1\|^2\cdot\|v_2\|^2+\|v_1\|^2\cdot\|w_2\|^2+2|w_2w_1^*|^2\right)\sqrt{|s_{11}s_{22}|}\,.
\end{multline}
We use equations~\eqref{diagonality} and \eqref{s21} to rewrite $s_{21}w_1w_2^*=-s_{21}v_1v_2^*=\mp\sqrt{|s_{11}s_{22}|}\cdot\|v_1\|\cdot\|v_2\|$. Similarly, we rewrite term $|w_2w_1^*|^2$ on the right hand side of~\eqref{a=b} using equation~\eqref{|w|}. As a result of these operations certain terms in equation~\eqref{a=b} cancel, and we get
\begin{multline}\label{modif.}
\pm\left(s_{11}(1+\|w_2\|^2)+s_{22}(1+\|w_1\|^2)\right)\cdot\|v_1\|\cdot\|v_2\| \\
=\left(\|v_1\|^2(1+\|w_2\|^2)+\|v_2\|^2(1+\|w_1\|^2)\right)\cdot\sqrt{|s_{11}s_{22}|}\,.
\end{multline}
The last condition among \eqref{conditions} to be examined is $d=0$. We substitute for $s_{21}$ from equation~\eqref{s21} into the expression for $d$; then $d=0$ is equivalent to
\begin{equation}\label{d=0}
-s_{11}\|v_2\|^2-s_{22}\|v_1\|^2\pm2\sqrt{|s_{11}s_{22}|}\cdot\|v_1\|\cdot\|v_2\|=0\,.
\end{equation}
Now we will examine equation~\eqref{d=0} using equation~\eqref{modif.}. We distinguish three cases.

\paragraph{Case $s_{11}s_{22}=0$.} In this case equation~\eqref{d=0} together with~\eqref{nonzero} implies $s_{11}=s_{22}=0$. Hence $S=0$ due to equation~\eqref{|s21|}. Consequently, condition~\eqref{modif.} is always satisfied for $s_{11}s_{22}=0$.

\paragraph{Case $s_{11}s_{22}>0$.} Equation~\eqref{d=0} is equivalent to
$$
\left(\sqrt{|s_{11}|}\|v_2\|\mp\sgn(s_{11})\sqrt{|s_{22}|}\|v_1\|\right)^2=0\,,
$$
hence, due to~\eqref{nonzero},
\begin{equation}\label{pozit}
\pm\sgn(s_{11})=1 \quad\text{and}\quad \sqrt{\frac{s_{11}}{s_{22}}}=\frac{\|v_1\|}{\|v_2\|}\,.
\end{equation}
Similarly, equation~\eqref{modif.} is equivalent to
\begin{multline}\label{fb pozit}
\pm\sgn(s_{11})\left(|s_{11}|(1+\|w_2\|^2)+|s_{22}|(1+\|w_1\|^2)\right)\|v_1\|\cdot\|v_2\| \\
=\left(\|v_1\|^2(1+\|w_2\|^2)+\|v_2\|^2(1+\|w_1\|^2)\right)\cdot\sqrt{|s_{11}s_{22}|}\,.
\end{multline}
When we substitute relations~\eqref{pozit} into equation~\eqref{fb pozit}, we get an identity. In other words, condition~\eqref{modif.} is always satisfied for $s_{11}s_{22}>0$.

\paragraph{Case $s_{11}s_{22}<0$.} Equation~\eqref{d=0} is equivalent to
\begin{multline*}
\left(\sqrt{|s_{11}|}\|v_2\|+(\sqrt{2}\mp\sgn(s_{11}))\sqrt{|s_{22}|}\|v_1\|\right) \\
\cdot\left(\sqrt{|s_{11}|}\|v_2\|-(\sqrt{2}\pm\sgn(s_{11}))\sqrt{|s_{22}|}\|v_1\|\right)=0
\end{multline*}
hence
\begin{equation}\label{negat}
\sqrt{\frac{|s_{11}|}{|s_{22}|}}=\left(\sqrt{2}\pm\sgn(s_{11})\right)\frac{\|v_1\|}{\|v_2\|}\,.
\end{equation}
Equation~\eqref{modif.} is equivalent to
\begin{multline}\label{fb negat}
\pm\sgn(s_{11})\left(|s_{11}|(1+\|w_2\|^2)-|s_{22}|(1+\|w_1\|^2)\right)\|v_1\|\|v_2\| \\
=\left(\|v_1\|^2(1+\|w_2\|^2)+\|v_2\|^2(1+\|w_1\|^2)\right)\cdot\sqrt{|s_{11}s_{22}|}\,.
\end{multline}
When we use relation~\eqref{negat} in equation~\eqref{fb negat}, we get the equation
$$
\pm\sgn(s_{11})\sqrt{2}\left(\|v_1\|^2(1+\|w_2\|^2)-\|v_2\|^2(1+\|w_1\|^2)\right)=0\,.
$$
Hence
\begin{equation}\label{equiv12}
\frac{\|v_1\|^2}{1+\|w_1\|^2}=\frac{\|v_2\|^2}{1+\|w_2\|^2}\,.
\end{equation}
We apply equivalence~\eqref{equiv12} in equation~\eqref{simplified} and get
$$
\left(1+\|w_1\|^2-\|v_1\|^2\right)^2=4\|v_1\|^4\,;
$$
hence
\begin{equation}\label{vw negat}
\|w_1\|^2=3\|v_1\|^2-1\,,\qquad \|w_2\|^2=3\|v_2\|^2-1\,.
\end{equation}

At this stage we have finished the study of conditions~\eqref{lim=0} and \eqref{const}. It remains to check condition~\eqref{drop}. It is straightforward to derive the formula
$$
\Prob(E)=\left(\frac{2|v_2v_1^*|}{(1+\|w_1\|^2)(1+\|w_2\|^2)-|w_2w_1^*|^2}\right)^2\cdot\frac{\left(1-\sqrt{1-\frac{V}{E}}\right)^2+\frac{|s_{21}|^2}{\|v_1\|^2\|v_2\|^2}\cdot\frac{1}{E}}{\left(1+\sqrt{1-\frac{V}{E}}\right)^2+\frac{|s_{21}|^2}{\|v_1\|^2\|v_2\|^2}\cdot\frac{1}{E}}
$$
for all $E>V$. It is easy to verify that $\lim_{E\searrow V}\Prob'(E)=-\infty$.

Let us summarize the results in the following theorem.
\begin{theorem}\label{Thm.flat}
Consider a star graph with a vertex coupling described by boundary conditions~\eqref{ST2}. The transmission probability in the input-output channel satisfies conditions~\eqref{const}--\eqref{lim=0} if and only if vectors $v_1$ and $v_2$ in matrix $T$ \eqref{Tvw} are linearly dependent, vectors $w_1,w_2$ obey requirements~\eqref{diagonality} and~\eqref{simplified}, and one of the following three cases holds true.
\begin{itemize}
\item $S=0$;
\item $s_{11}s_{22}>0$ and
$$
\sqrt{\frac{s_{11}}{s_{22}}}=\frac{\|v_1\|}{\|v_2\|}\,,\qquad s_{21}=\sgn(s_{11})\cdot\sqrt{s_{11}s_{22}}\frac{v_2v_1^*}{\|v_1\|\cdot\|v_2\|}\,.
$$
\item $s_{11}s_{22}<0$, condition~\eqref{vw negat} is satisfied, and matrix $S$ obeys conditions
$$
\sqrt{\frac{|s_{11}|}{|s_{22}|}}=\left(\sqrt{2}\pm\sgn(s_{11})\cdot1\right)\frac{\|v_1\|}{\|v_2\|} \quad\text{and}\quad s_{21}=\pm\sqrt{|s_{11}s_{22}|}\frac{v_2v_1^*}{\|v_1\|\cdot\|v_2\|}\,.
$$
\end{itemize}
\end{theorem}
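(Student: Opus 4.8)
The plan is to assemble the equivalence from the sequential analysis of the three defining conditions, treating necessity and sufficiency together wherever the steps are reversible. Because formula~\eqref{T2} already expresses $\Trans(E)$ explicitly in terms of $S$ and the blocks $v_1,v_2,w_1,w_2$ of $T$, the whole argument reduces to extracting algebraic constraints on these quantities from the requirement that $\Prob(E)=|\Trans(E)|^2$ satisfy \eqref{const}, \eqref{drop} and~\eqref{lim=0}.

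First I would dispose of condition~\eqref{lim=0}. Using $\lim_{E\to\infty}D=I$, and hence $\lim_{E\to\infty}M(E)=TT^*$, which is Hermitian positive-definite, formula~\eqref{T2} shows that $\lim_{E\to\infty}\Trans(E)=0$ is equivalent to the vanishing of the off-diagonal entry of $TT^*$, i.e.\ to~\eqref{diagonality}. Next I would treat condition~\eqref{const} on the interval $(0,V)$ starting from the explicit amplitude~\eqref{T(E) 2}. Lemma~\ref{Lem.nonzero} rules out $v_2v_1^*=0$ and yields~\eqref{nonzero}, after which the core step is to read $\Prob(E)$ as a function of the single variable $u=\sqrt{V/E-1}$ (with its accompanying $1/\sqrt{E}$ and $1/E$ factors) and to demand that it be constant on $(0,V)$. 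Writing $\Prob(E)=N\bar N/(\Delta\bar\Delta)$ with $N,\Delta$ the numerator and denominator of~\eqref{T(E) 2}, and matching the independent functional pieces, forces the system~\eqref{conditions}; conversely, when~\eqref{conditions} hold the amplitude collapses to the conjugate-pair form displayed just before~\eqref{P.2}, so its modulus is constant. This gives the equivalence of~\eqref{const} (granted~\eqref{lim=0}) with~\eqref{conditions}.

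The remaining work is to rewrite the five equations of~\eqref{conditions} as the geometric constraints appearing in the statement and to organize the case split. Condition $c=0$ is exactly~\eqref{lin_dep}, which by Cauchy--Schwarz means $v_1,v_2$ are linearly dependent, and together with~\eqref{diagonality} gives~\eqref{|w|}; condition $|a|=|b|$ then reduces to~\eqref{simplified}; condition $g=0$ gives~\eqref{|s21|}; and $\frac{f}{b}=\frac{s_{21}}{v_2v_1^*}$ forces $s_{21}/(v_2v_1^*)\in\RR$ and hence~\eqref{s21}, which in turn converts the last requirement into~\eqref{modif.}. Finally, analysing $d=0$ (equation~\eqref{d=0}) against~\eqref{modif.} according to the sign of $s_{11}s_{22}$ yields precisely the three bullets: $S=0$ when $s_{11}s_{22}=0$, the relations~\eqref{pozit} when $s_{11}s_{22}>0$, and the relations~\eqref{negat} together with the extra constraint~\eqref{vw negat} when $s_{11}s_{22}<0$. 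Sufficiency in each case follows by substituting back and checking that~\eqref{modif.} becomes an identity.

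At the end I would verify condition~\eqref{drop}, which has not yet constrained the parameters. Deriving the closed form of $\Prob(E)$ for $E>V$ (the display preceding the theorem), one sees that the only $E$-dependence enters through $\sqrt{1-V/E}$, whose derivative blows up as $E\searrow V$; a direct computation then gives $\lim_{E\searrow V}\Prob'(E)=-\infty$, so~\eqref{drop} holds automatically whenever~\eqref{const} and~\eqref{lim=0} do. The main obstacle is the coefficient-matching step for~\eqref{const}: one must argue rigorously that constancy of $\Prob(E)$ on the \emph{entire} interval forces all five equations of~\eqref{conditions} at once, i.e.\ that the functions $1,u,u^2,u/\sqrt{E},1/\sqrt{E},1/E$ are independent enough (the pieces carrying $\sqrt{u^2+1}$ being separable from the polynomial-in-$u$ pieces) that the modulus-squared ratio can be constant only when numerator and denominator are complex conjugates up to a real scalar. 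Everything else is bookkeeping of norms and inner products.
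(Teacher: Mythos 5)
Your proposal follows essentially the same route as the paper's own argument: condition~\eqref{lim=0} is reduced to~\eqref{diagonality} via $\lim_{E\to\infty}M(E)=TT^*$, condition~\eqref{const} is reduced (after Lemma~\ref{Lem.nonzero}) to the system~\eqref{conditions}, which is then rewritten via Cauchy--Schwarz and the sign-of-$s_{11}s_{22}$ case split into the three bullets of the theorem, and~\eqref{drop} is verified to hold automatically from the closed form of $\Prob(E)$ for $E>V$. The coefficient-matching step you single out as the main obstacle is precisely the point the paper itself treats informally (``we see from the structure of the numerator and the denominator\dots''), so your plan is faithful to the published proof, including its one terse spot.
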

An example of transmission probability function for $S,T$ chosen according to Theorem~\ref{Thm.flat} is shown in Figure~\ref{Fig.0.16}.
\begin{figure}[h]
\begin{center}
\includegraphics[angle=0,width=0.45\textwidth]{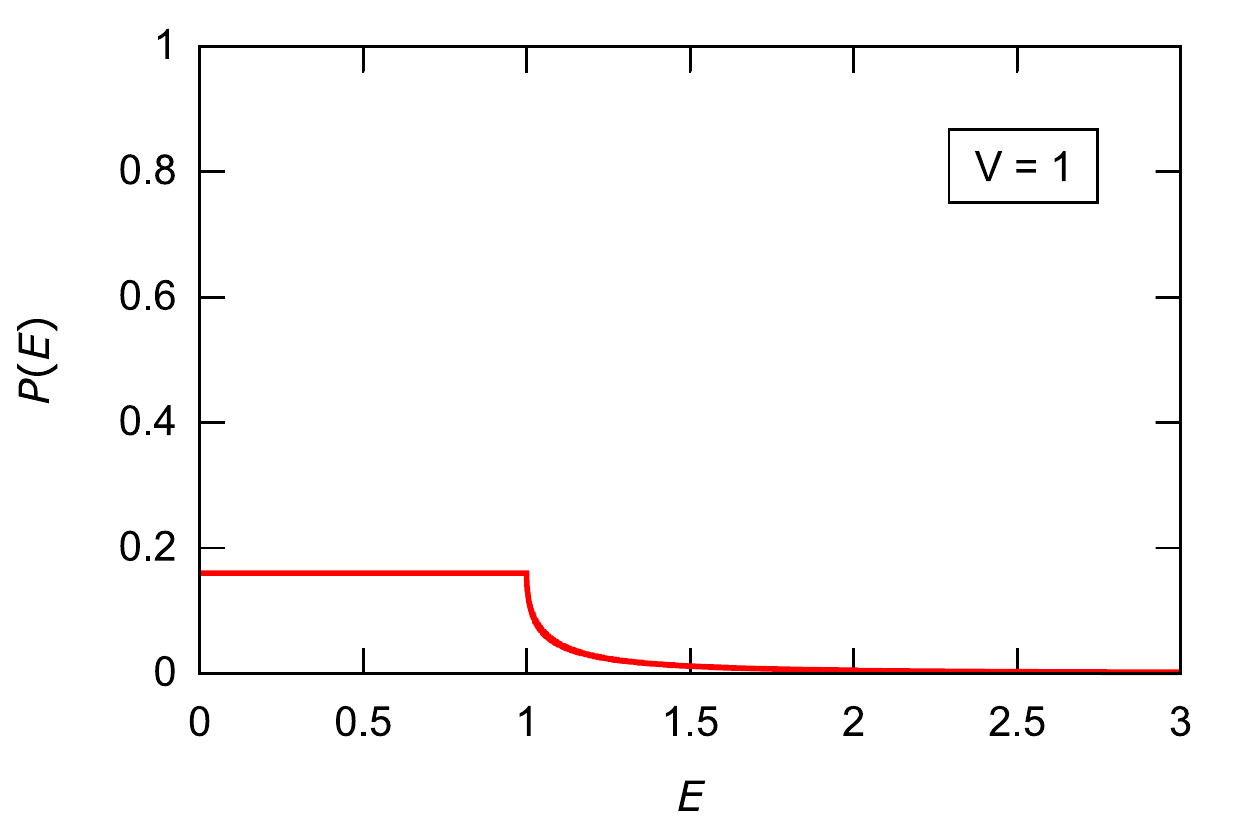}
\caption{An example of a transmission probability featuring a flat passband. The function is obtained for the choice $\|v_1\|^2=\frac{5}{8}$, $v_2=v_1$, $\|w_1\|^2=\|w_2\|^2=\frac{7}{8}$, $|w_2w_1^*|^2=\frac{5}{8}$, $S=0$ and for the controlling potential $V=1$.}\label{Fig.0.16}
\end{center}
\end{figure}

Theorem~\ref{Thm.flat} describes the structure of matrices $S,T$ in boundary conditions~\eqref{ST2} for which the star graph works as a band-pass filter with a flat passband. In the rest of the section we will find the maximal possible value of $|\Trans(E)|$ in the ``flat band'' interval $(0,V)$, and we will characterize the corresponding matrices $S,T$. With regard to equation~\eqref{P.2}, we have to find the maximum of the quantity
\begin{equation}\label{f.max}
\left(\frac{2|v_2v_1^*|}{1+\|w_1\|^2+\|w_2\|^2+\|w_1\|^2\cdot\|w_2\|^2-|w_2w_1^*|^2}\right)^2
\end{equation}
under conditions given in Theorem~\ref{Thm.flat}. Note that the expression~\eqref{f.max} is independent of $S$, and the entries of $S$ can be calculated after $T$ is fixed. Therefore, we will at first find the maximum of expression~\eqref{f.max} under conditions~\eqref{diagonality} and~\eqref{simplified}, whereas matrix $S$ will be calculated later.
We denote
$$
\|w_1\|^2=x\,,\; \|w_2\|^2=y\,,\; \|v_1\|^2=z
$$
and
$$
|v_2v_1^*|^2=xyu \quad \text{for a certain $u\in(0,1]$},
$$
which is possible due to $|v_2v_1^*|^2=|w_2w_1^*|^2\leq\|w_1\|^2\cdot\|w_2\|^2$.
We express $\|v_2\|^2$ using equations~\eqref{diagonality} and \eqref{lin_dep},
$$
\|v_2\|^2=\frac{\|v_1\|^2\cdot\|v_2\|^2}{\|v_1\|^2}=\frac{|v_2v_1^*|^2}{\|v_1\|^2}=\frac{|w_2w_1^*|^2}{\|v_1\|^2}=\frac{xyu}{z}\,.
$$
We shall find the maximum of the function
$$
F(x,y,z,u)=\left(\frac{2\sqrt{xyu}}{1+x+y+xy-xyu}\right)^2
$$
(cf.~\eqref{f.max}) under condition
$$
\left(1+x-z\right)\left(1+y-\frac{xyu}{z}\right)=4xyu\,.
$$
We proceed in a standard way. We introduce the Langrage function
$$
\mathcal{L}(x,y,z,u,\lambda)=\frac{2\sqrt{xyu}}{1+x+y+xy-xyu}-\lambda\cdot\left[\left(1+x-z\right)\left(1+y-\frac{xyu}{z}\right)-4xyu\right]
$$
and solve the system
\begin{equation}\label{L system}
\frac{\partial\mathcal{L}}{\partial x}=0\,,\quad\frac{\partial\mathcal{L}}{\partial y}=0\,,\quad\frac{\partial\mathcal{L}}{\partial z}=0\,,\quad\frac{\partial\mathcal{L}}{\partial u}=0\,.
\end{equation}
It turns out that \eqref{L system} has no solution. Therefore, we shall search for the maximum of $F$ at the boundary of its domain, i.e., for $u=1$. If we fix $u=1$ and solve the system $\frac{\partial\mathcal{L}}{\partial x}=\frac{\partial\mathcal{L}}{\partial y}=\frac{\partial\mathcal{L}}{\partial z}=0$, we obtain
$$
x=y=z=\frac{1}{2}\,.
$$
Hence we find the sought maximum of the function $F$,
$$
F\left(\frac{1}{2},\frac{1}{2},\frac{1}{2},1\right)=\frac{1}{4}\,.
$$
Note that $u=1$ implies $|w_2w_1^*|=\|w_1\|\cdot\|w_2\|$, i.e., $w_1,w_2$ are linearly dependent. 

\begin{theorem}\label{Thm.max}
The maximal transmission probability of a band-pass filter with flat passband, constructed upon a vertex with boundary conditions~\eqref{ST2}, is $\frac{1}{4}$. It is obtained for
\begin{equation}\label{Tmax}
T=
\begin{pmatrix}
v & w \\
\alpha v & -\alpha w
\end{pmatrix}
\qquad\text{for } \|v\|=\|w\|=\frac{1}{\sqrt{2}}\,,\quad |\alpha|=1\,,
\end{equation}
and
\begin{equation}\label{Smax}
S=s\begin{pmatrix}1&\bar{\alpha}\\ \alpha&1\end{pmatrix} \qquad\text{or}\qquad
S=s\begin{pmatrix}1\pm\sqrt{2}&\bar{\alpha}\\ \alpha&1\mp\sqrt{2}\end{pmatrix}
\qquad\text{for } s\in\RR\,.
\end{equation}
\end{theorem}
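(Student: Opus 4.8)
The plan is to exploit the fact that, by formula~\eqref{P.2}, the transmission probability is constant throughout the passband $(0,V)$ and equal to the $S$--independent quantity~\eqref{f.max}. Maximizing $|\Trans(E)|$ over admissible filters thus splits into two tasks: first maximize~\eqref{f.max} over all matrices $T$ permitted by Theorem~\ref{Thm.flat}, that is, over $T$ satisfying~\eqref{diagonality} and~\eqref{simplified}; and only afterwards determine, for the optimal $T$, which Hermitian matrices $S$ remain compatible with Theorem~\ref{Thm.flat}. First I would settle the extremal value and the norms of $v_1,v_2,w_1,w_2$, postponing the reconstruction of the actual matrices $T$ and $S$ to the end.

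For the optimization I would use the substitution $\|w_1\|^2=x$, $\|w_2\|^2=y$, $\|v_1\|^2=z$ and $|v_2v_1^*|^2=xyu$ with $u\in(0,1]$, which builds in the Cauchy--Schwarz equalities~\eqref{lin_dep} and~\eqref{|w|} while letting $u$ measure the defect of linear dependence of $w_1,w_2$. The problem becomes maximizing $F(x,y,z,u)$ under the single constraint inherited from~\eqref{simplified}. Solving the Lagrange system~\eqref{L system} should show that it admits no solution in the interior $u<1$, so the maximum is attained on the boundary $u=1$; there the reduced stationarity conditions yield $x=y=z=\tfrac12$ and the extremal value $F=\tfrac14$. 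In particular all four norms equal $1/\sqrt2$, and $u=1$ forces $w_1,w_2$ to be linearly dependent.

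Next I would read off the matrix $T$ from these scalar data. Equality in~\eqref{lin_dep} makes $v_1,v_2$ linearly dependent, so $v_2=\alpha v_1$ with $|\alpha|=1$ because $\|v_1\|=\|v_2\|=1/\sqrt2$; the boundary value $u=1$ similarly gives $w_2=\beta w_1$ with $|\beta|=1$. Condition~\eqref{diagonality} then reads $\alpha\|v_1\|^2+\beta\|w_1\|^2=0$, and since both norms equal $\tfrac12$ this forces $\beta=-\alpha$. Writing $v=v_1$ and $w=w_1$ reproduces precisely the form~\eqref{Tmax}.

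Finally I would compute $S$ by inserting the optimal $T$ into the three cases of Theorem~\ref{Thm.flat}. The cases $S=0$ and $s_{11}s_{22}>0$ merge: since $\|v_1\|=\|v_2\|$, the condition $\sqrt{s_{11}/s_{22}}=\|v_1\|/\|v_2\|$ gives $s_{11}=s_{22}=:s$, and then $s_{21}=\sgn(s_{11})\sqrt{s_{11}s_{22}}\cdot\frac{v_2v_1^*}{\|v_1\|\,\|v_2\|}=s\alpha$, yielding the first family in~\eqref{Smax}. The step I expect to be the main obstacle, apart from the purely computational verification that~\eqref{L system} has no interior solution, is the case $s_{11}s_{22}<0$: one must first confirm that~\eqref{vw negat} is consistent with $\|v_i\|^2=\tfrac12$ (it gives $\|w_i\|^2=3\cdot\tfrac12-1=\tfrac12$, as needed) and then track the two linked $\pm$ signs appearing in~\eqref{negat} and~\eqref{s21}. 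Setting $\|v_1\|/\|v_2\|=1$ turns~\eqref{negat} into $\sqrt{|s_{11}|/|s_{22}|}=\sqrt2\pm1$, which forces $s_{11}$ and $s_{22}$ to be proportional, through a common real factor $s$, to $1\pm\sqrt2$ and $1\mp\sqrt2$; the coupled sign in~\eqref{s21} then fixes $s_{21}=s\alpha$, and together these produce the second family in~\eqref{Smax}, completing the proof.
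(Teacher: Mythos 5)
Your proposal is correct and follows essentially the same route as the paper: the same substitution $\|w_1\|^2=x$, $\|w_2\|^2=y$, $\|v_1\|^2=z$, $|v_2v_1^*|^2=xyu$, the same Lagrange-multiplier argument forcing the maximum onto the boundary $u=1$ with $x=y=z=\tfrac12$ and value $\tfrac14$, and the same reconstruction of $T$ (via $v_2=\alpha v_1$, $w_2=-\alpha w_1$) and of $S$ from the three cases of Theorem~\ref{Thm.flat}. Your treatment is, if anything, slightly more explicit than the paper's (e.g.\ deriving $\beta=-\alpha$ from~\eqref{diagonality} and checking the consistency of~\eqref{vw negat} with $\|v_i\|^2=\tfrac12$), but the approach is identical.
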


\begin{proof}
According to calculations above, the maximal transmission probability is $\frac{1}{4}$, and this value is attained for $\|v_1\|=\|v_2\|=\|w_1\|=\|w_2\|=\frac{1}{\sqrt{2}}$. Vectors $v_1,v_2$ are linearly dependent due to Theorem~\ref{Thm.flat}; hence $v_2=\alpha\cdot v_1$ for an $\alpha$ satisfying $|\alpha|=1$. Equation~\eqref{diagonality} implies $w_2=-\alpha\cdot w_1$. Furthermore, equations listed in Theorem~\ref{Thm.flat} imply that either $S=0$, or the entries of $S$ satisfy
$$
s_{11}=s_{22}=s\,,\quad s_{21}=s\cdot\alpha
$$
for a certain $s\neq0$, or
\begin{align*}
s_{11}&=s\cdot\left(\sqrt{2}\pm\sgn(s)\cdot1\right) \\
s_{22}&=-s\cdot\left(\sqrt{2}\mp\sgn(s)\cdot1\right) \\
s_{21}&=\pm|s|\cdot\alpha
\end{align*}
for a certain $s\neq0$.
It is easy to check that all the cases above are fully covered by formulas \eqref{Smax}.
\end{proof}
Figure~\ref{Fig.maximal} shows two examples of the transmission probability functions obtained for $S,T$ obeying conditions from Theorem~\ref{Thm.max}.
\begin{figure}[h]
\begin{center}
\includegraphics[angle=0,width=0.45\textwidth]{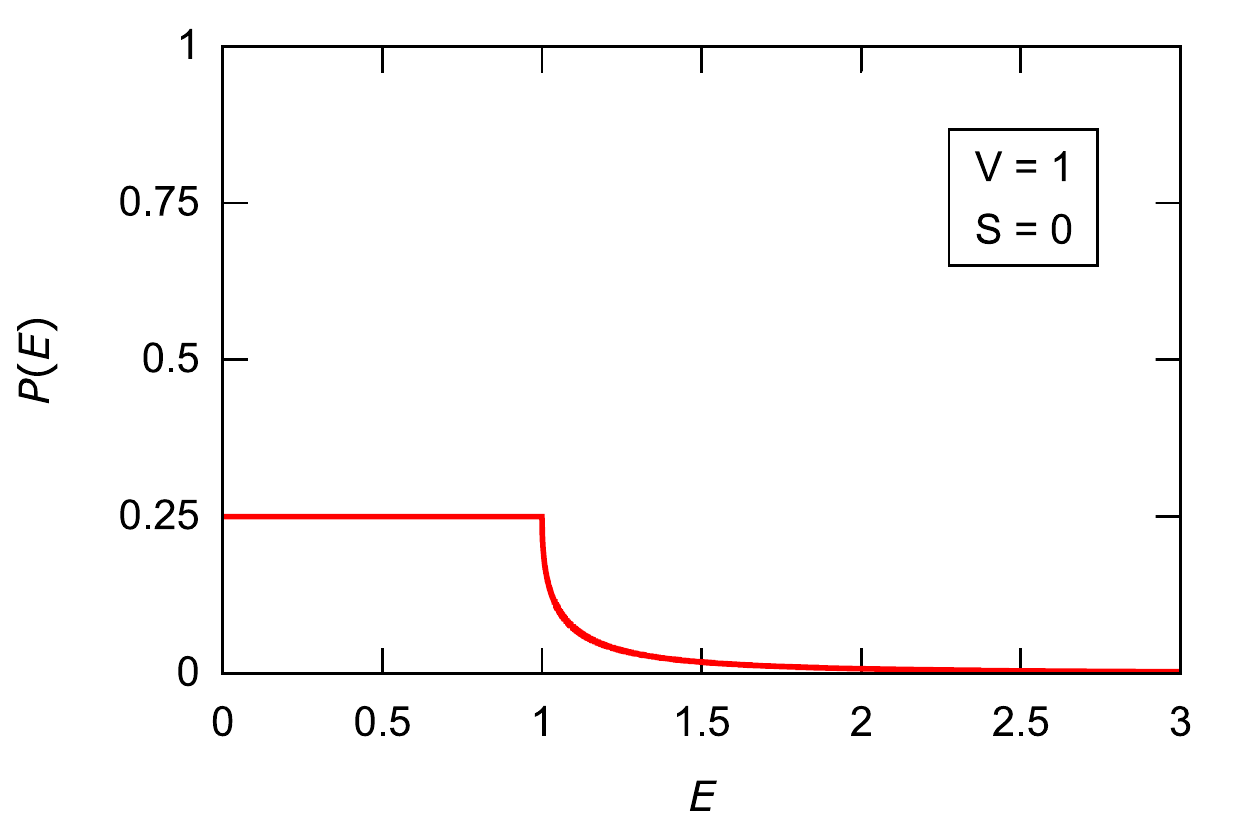}
\quad
\includegraphics[angle=0,width=0.45\textwidth]{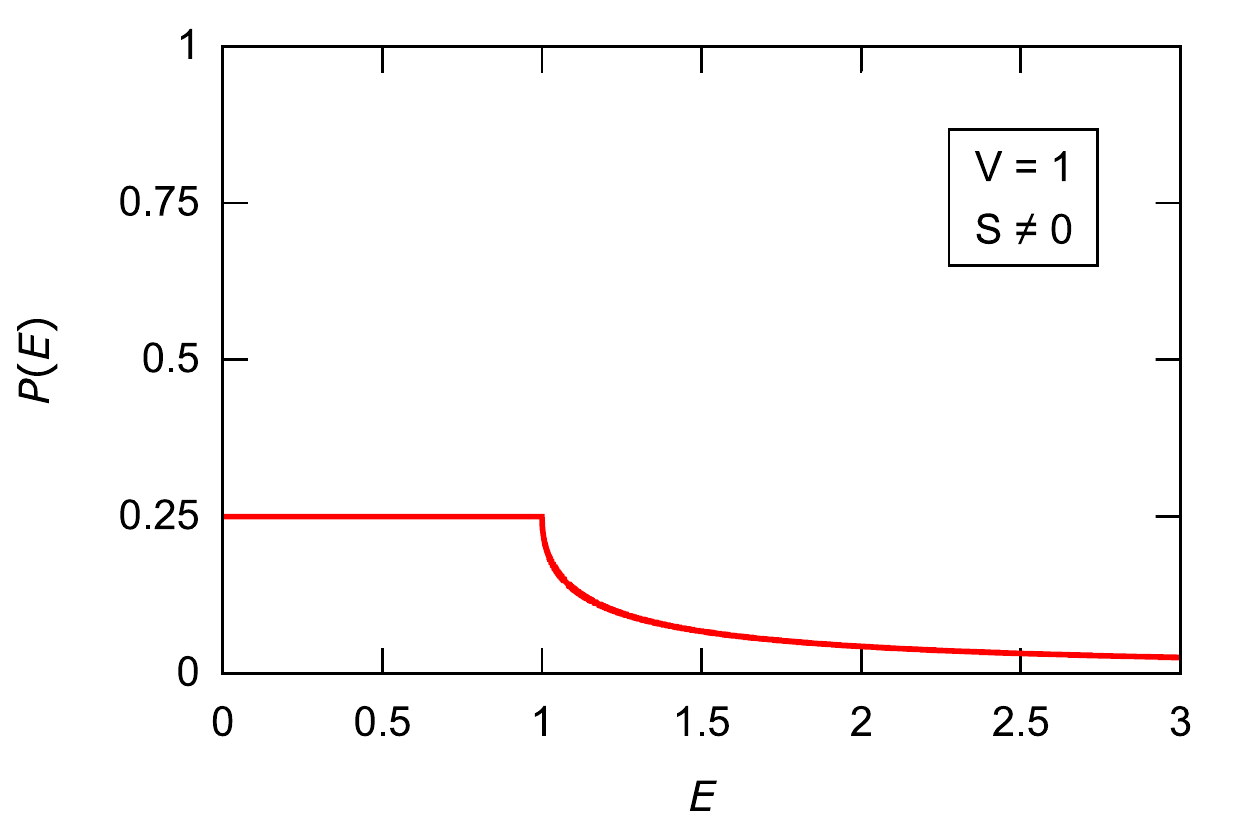}
\caption{The maximal transmission probability in the passband for the controlling potential $V=1$. The graphs display the function $\Prob(E)$ for $T$ given by equation~\eqref{Tmax} and $S=\left(\protect\begin{smallmatrix}
0 & 0 \protect\\
0 & 0
\protect\end{smallmatrix}\right)$ (left) and $S=\frac{1}{2}\left(\protect\begin{smallmatrix}
1 & 1 \protect\\
1 & 1
\protect\end{smallmatrix}\right)$ (right).}\label{Fig.maximal}
\end{center}
\end{figure}

\begin{remark}
Matrix $T$ given by equation~\eqref{Tmax} together with $S=0$ generalizes an earlier result. In~\cite{TC12}, a graph consisting of the input, output, one controlling edge and one drain, coupled in a vertex by scale invariant boundary conditions with
$$
T=
\frac{1}{\sqrt{2}}\begin{pmatrix}
1 & 1 \\
1 & -1
\end{pmatrix}\,,
$$
was examined. It was demonstrated that the transmission probability is constant in the interval $(0,V)$ and quickly decreases towards zero as $E$ exceeds the controlling potential $V$. Theorem~\ref{Thm.max} implies that the flat-band behaviour persists even if the scale invariance of the coupling is broken. This is a quite surprising fact.
\end{remark}

\section{Case $r\geq3$}\label{Sect.3}

The ideas demonstrated in previous sections can be used for treating vertex couplings with $r=\rank(B)\geq3$ as well.
Section~\ref{Sect.LD} implies that if the sought band-pass filter with flat passband exists, then the columns of $B$ that correspond to the input and output line need to be linearly independent.
This allows us to express the boundary conditions in the vertex in the $ST$-form as follows,
\begin{equation}\label{FTdec3}
\begin{pmatrix}
I^{(2)} & 0 & T_1 \\
0 & I^{(r-2)} & T_2 \\
0 & 0 & 0
\end{pmatrix}
\begin{pmatrix}
\Psi'_{io} \\
\Psi'_{cd}
\end{pmatrix}
=
\begin{pmatrix}
0 & 0 & 0 \\
0 & 0 & 0 \\
-T_1^* & -T_2^* & I^{(n-r)}
\end{pmatrix}
\begin{pmatrix}
\Psi_{io} \\
\Psi_{cd}
\end{pmatrix}
\,,
\end{equation}
where $\Psi_{io},\Psi'_{io}$ are the boundary values at input and output (cf.~\eqref{Psi_io}), $\Psi_{cd},\Psi'_{cd}$ are the boundary values at controlling lines and drains (controlling edges and drains not being distinguished now), $I^{(2)},I^{(r-2)},I^{(n-r)}$ are identity matrices of given orders, and $T=\begin{pmatrix}T_1\\T_2\end{pmatrix}\in\CC^{r,n-r}$.

Relation~\eqref{proportional} implies
\begin{equation}\label{prop3}
\Psi'_{cd}=\im\begin{pmatrix}K_2&0\\0&K_3\end{pmatrix}\Psi_{cd}\,,
\end{equation}
where $K_2=\diag(k_3,\ldots,k_r)$ and $K_3=\diag(k_{r+1},\ldots,k_n)$.
Elimination of $\Psi_{cd}$ and $\Psi'_{cd}$ from system~\eqref{FTdec3} using equation~\eqref{prop3} leads to the conditions
$$
\Psi'_{io}=-\im T_1\left(K_3+T_2^*K_2^{-1}T_2\right)^{-1}T_1^*\Psi_{io}\,.
$$
Formula~\eqref{S} gives the dissipative scattering matrix
\begin{equation}\label{Sred3}
\Sc_\diss(E)=-I+\frac{2}{1+\Tr(M(E))+\det(M(E))}\adj(M(E))\,,
\end{equation}
where the matrix $M(E)$ is given as
$$
M(E)=T_1\left(D_3+T_2^*D_2^{-1}T_2\right)^{-1}T_1^*
$$
with $D_2=\diag\left(\frac{k_3}{\sqrt{E}},\cdots,\frac{k_r}{\sqrt{E}}\right)$ and $D_3=\diag\left(\frac{k_{r+1}}{\sqrt{E}},\cdots,\frac{k_n}{\sqrt{E}}\right)$.
Consequently, the transmission amplitude is
\begin{equation}\label{T3}
\Trans(E)=\frac{-2[M(E)]_{21}}{1+\Tr(M(E))+\det(M(E))}\,.
\end{equation}
We require $\lim_{E\to\infty}\Trans(E)=0$ according to~\eqref{lim=0}. Since we have $\lim_{E\to\infty}D_2=I$ and $\lim_{E\to\infty}D_3=I$, we get
$$
\lim_{E\to\infty}M(E)=T_1\left(I+T_2^*T_2\right)^{-1}T_1^*\,.
$$
The matrix on the right hand side is Hermitian and positive-definite. The denominator of~\eqref{Sred3} thus tends to a positive number greater than $1$ as $E\to\infty$. Therefore, equation~\eqref{T3} gives the equivalence
$$
\lim_{E\to\infty}\Trans(E)=0 \quad\Leftrightarrow\quad T_1\left(I+T_2^*T_2\right)^{-1}T_1^* \text{ is diagonal}.
$$
To sum up, a quantum star graph with the vertex coupling given by boundary conditions~\eqref{FTdec3} can work as a band-pass filter with flat passband only if $T_1\left(I+T_2^*T_2\right)^{-1}T_1^*$ is a diagonal matrix.

Analyzing condition~\eqref{const} needs to distinguish controllers and drains in both sets $\{3,\ldots,r\}$ and $\{r+1,\ldots,n\}$, which would make the problem more intricate. Therefore, the case $r\geq3$ in general will not be addressed in this paper; nevertheless, the method presented in Sections~\ref{Sect.1}--\ref{Sect.2} is in principle applicable.

\begin{remark}
Although we focused on graphs working as spectral band-pass filters with flat passbands, the same approach can be used more generally. Taking advantage of the $ST$-form of boundary conditions, one can explore and design quantum graphs with various other special transmission characteristics, such as filters having a sharp peak in $\Prob(E)$ at a certain energy.
\end{remark}




\frenchspacing

\end{document}